\let\MYcaption\@makecaption
\let\@makecaption\MYcaption
\newtheorem{example}{Example}
\newtheorem{lemma}{Lemma}
\newtheorem{theorem}{Theorem}
\newtheorem{fact}{Fact}
    \newcommand{\linebreakand}{%
      \end{@IEEEauthorhalign}
      \hfill\mbox{}\par
      \mbox{}\hfill\begin{@IEEEauthorhalign}
    }
\newcommand{\newlineauthors}{%
  \end{@IEEEauthorhalign}\hfill\mbox{}\par
  \mbox{}\hfill\begin{@IEEEauthorhalign}
  }
\def\tagform@#1{\maketag@@@{\ignorespaces#1\unskip\@@italiccorr}}
\let\orgtheequation\theequation
\def\theequation{(\orgtheequation)}
\let\orgautoref\autoref
\renewcommand{\autoref}[1]{\def\equationautorefname{Eq.}\orgautoref{#1}}
\begin{document}

\title{%
Depth-Optimal Synthesis \\
of Clifford Circuits with SAT Solvers
}

\author{
  \IEEEauthorblockN{Tom Peham\IEEEauthorrefmark{1}}
\IEEEauthorblockA{\textit{Chair for Design Automation} \\
\textit{Technical University of Munich}\\
Germany \\
tom.peham@tum.de}
\and
\IEEEauthorblockN{Nina Brandl\IEEEauthorrefmark{1}}
\IEEEauthorblockA{\textit{Institute for Integrated Circuits} \\
\textit{Johannes Kepler University Linz}\\
Austria \\
nina.brandl@jku.at}
\newlineauthors
\IEEEauthorblockN{Richard Kueng}
\IEEEauthorblockA{\textit{Institute for Integrated Circuits} \\
\textit{Johannes Kepler University Linz}\\
Austria \\
richard.kueng@jku.at}
\and
\IEEEauthorblockN{Robert Wille}
\IEEEauthorblockA{\textit{Chair for Design Automation} \\
\textit{Technical University of Munich}\\
Germany \\
\textit{Software Competence Center Hagenberg GmbH}\\
Austria\\
robert.wille@tum.de}
\and
\IEEEauthorblockN{Lukas Burgholzer}
\IEEEauthorblockA{\textit{Institute for Integrated Circuits} \\
\textit{Johannes Kepler University Linz}\\
Austria \\
lukas.burgholzer@jku.at}

\linebreakand

\IEEEauthorblockA{\IEEEauthorrefmark{1} Both authors contributed equally to this work.}%
\vspace*{-5mm}
}

\maketitle

\begin{abstract}
Circuit synthesis is the task of decomposing a given logical functionality into a sequence of elementary gates. 
It is (depth-)optimal if it is impossible to achieve the desired functionality with even shorter circuits.
Optimal synthesis is a central problem in both quantum and classical hardware design, but also plagued by complexity-theoretic obstacles.
Motivated by fault-tolerant quantum computation, we consider the special case of synthesizing blocks of Clifford unitaries.
Leveraging entangling input stimuli and the stabilizer formalism allows us to reduce the Clifford synthesis problem to a family of poly-size satisfiability (SAT) problems -- one for each target circuit depth.
On a conceptual level, our result showcases that the Clifford synthesis problem is contained in the first level of the polynomial hierarchy ($\mathsf{NP}$), while the classical synthesis problem for logical circuits is known to be complete for the second level of the polynomial hierarchy ($\Sigma_2^{\mathsf{P}}$).
Based on this theoretical reduction, we formulate a SAT encoding for depth-optimal Clifford synthesis.
We then employ SAT solvers to determine a satisfying assignment or to prove that no such assignment exists. From that, the shortest depth for which synthesis is still possible (optimality) as well as the actual circuit (synthesis) can be obtained.
Empirical evaluations show that the optimal synthesis approach yields a substantial depth improvement for random Clifford circuits and Clifford+T circuits for Grover search.

\end{abstract}

\section{Introduction}
\label{sec:intro}

Quantum computing %
is a computational paradigm that might offer computational advantages over classical algorithms for certain problems.
State of the art quantum computing hardware is still limited in scale, featuring a relatively small number of qubits that are prone to errors.
While finding short-term applications for these noisy intermediate scale quantum (NISQ) computers is an ongoing and vibrant research field~\cite{preskillQuantumComputingNISQ2018}, they cannot be used for advanced applications such as integer factoring (Shor;~\cite{shorPolynomialtimeAlgorithmsPrime1997}), unstructured search (Grover;~\cite{groverFastQuantumMechanical1996}), solving linear systems (HHL;~\cite{harrowQuantumAlgorithmLinear2009}), or convex optimization~\cite{brandaoQuantumSpeedUpsSolving2017,apeldoornQuantumSDPSolversBetter2020,brandaoFasterQuantumClassical2022}.

To scale up quantum computing to longer computations with many qubits, \emph{fault-tolerant} computation schemes have to be used that protect the information of a qubit against errors and allow the application of quantum gates with a low error-rate, see e.g.~\cite{kitaevQuantumComputationsAlgorithms1997,shorFaulttolerantQuantumComputation1996,nielsenQuantumComputationQuantum2010}.
One way to do this is using quantum error correcting codes and performing all computations in the Clifford+T gate-set~\cite{gottesmanStabilizerCodesQuantum1997}.
The benefit of restricting the gate-set is that
every gate 
can be performed in a fault-tolerant fashion on an appropriate code.

Fault-tolerant quantum circuits can become quite large due to the overhead from breaking down every quantum computation into this restricted gate-set, with error syndrome extraction and error correction steps.
For near-term applications they are usually optimized to have a minimal two-qubit gate count as these gates tend to have the highest error rates. For fault-tolerant computations, a suitable performance metric is circuit depth because it directly correlates with the runtime of the computation.

\begin{figure*}[t]
  \centering
  \begin{subfigure}[b]{1.0\linewidth}
  \resizebox{\linewidth}{!}{
  \begin{tikzpicture}
  \begin{yquant*}
    qubit q[3];

    [this subcircuit box style={dashed, label=Clifford Block,fill=green!20}]
    subcircuit {
    qubit {} q[3];
    h q[0];
    x q[0];
    h q[1];
    h q[2];
    x q[2];
    cnot q[1]|q[2];
  } (q[0-2]);
    box {$T^\dagger$} q[1];
    cnot q[1]|q[0];
    box {$T$} q[1];
    cnot q[1]|q[2];
    box {$T^\dagger$} q[1];
    cnot q[1]|q[0];
    box {$T$} q[2];
    cnot q[2]|q[0];
    box {$T$} q[1];
    box {$T$} q[0];
    box {$T^\dagger$} q[2];
    
    [this subcircuit box style={dashed, label=Clifford Block,fill=cyan!20}]
    subcircuit {
      qubit {} q[3];
      x q[1];
      h q[1];
      cnot q[2]|q[0];
      cnot q[1]|q[0];
      x q[0];
      h q[0];
      x q[0];
      h q[1];
      x q[1];
      z q[1];
      h q[1];
      x q[1];
      x q[2];
      z q[2];
      h q[2];
      x q[2];
      cnot q[2]|q[1];
    } (q[0-2]);
  
    box {$T^\dagger$} q[2];
    cnot q[2]|q[0];
    box {$T$} q[2];
    cnot q[2]|q[1];
    box {$T$} q[1];
    box {$T^\dagger$} q[2];
    cnot q[2]|q[0];
    cnot q[1]|q[0];
    align q;
    box {$T$} q[0];
    box {$T^\dagger$} q[1];
    box {$T$} q[2];
    [this subcircuit box style={dashed, label=Clifford Block,fill=magenta!20}]
    subcircuit {
    qubit {} q[3];
    cnot q[1]|q[0];
    x q[0];
    h q[0];
    x q[1];
    h q[1];
    x q[2];
    h q[2];
  } (q[0-2]);
  \end{yquant*}
\end{tikzpicture}}
\caption{Unoptimized Circuit}\label{fig:grover_unoptimized}
\end{subfigure}

\begin{subfigure}[b]{1.0\linewidth}
    \resizebox{.863\linewidth}{!}{
    \begin{tikzpicture}
  \begin{yquant*}
    qubit q[3];

    [this subcircuit box style={dashed,fill=green!20}]
    subcircuit {
    qubit {} q[3];
    h q[0];
    x q[0];
    h q[1];
    h q[2];
    x q[2];
    cnot q[1]|q[2];
  } (q[0-2]);
    box {$T^\dagger$} q[1];
    cnot q[1]|q[0];
    box {$T$} q[1];
    cnot q[1]|q[2];
    box {$T^\dagger$} q[1];
    cnot q[1]|q[0];
    box {$T$} q[2];
    cnot q[2]|q[0];
    box {$T$} q[1];
    box {$T$} q[0];
    box {$T^\dagger$} q[2];
    
    [this subcircuit box style={dashed,fill=cyan!20}]
    subcircuit {
      qubit {} q[3];
      cnot q[2]|q[0];
      h q[1];
      x q[1];
      x q[0];
      cnot q[1]|q[0];
      h q[0];
      h q[2];
      z q[2];
      cnot q[2]|q[1];
    } (q[0-2]);
  
    box {$T^\dagger$} q[2];
    cnot q[2]|q[0];
    box {$T$} q[2];
    cnot q[2]|q[1];
    box {$T$} q[1];
    box {$T^\dagger$} q[2];
    cnot q[2]|q[0];
    cnot q[1]|q[0];
    align q;
    box {$T$} q[0];
    box {$T^\dagger$} q[1];
    box {$T$} q[2];
    [this subcircuit box style={dashed,fill=magenta!20}]
    subcircuit {
    qubit {} q[3];
    cnot q[1]|q[0];
    x q[0];
    h q[0];
    x q[1];
    h q[1];
    x q[2];
    h q[2];
  } (q[0-2]);
  \end{yquant*}
\end{tikzpicture}}
\caption{Optimized Circuit}\label{fig:grover_optimized}
\end{subfigure}
  \caption{\emph{Two Clifford+T circuits for 3-qubit Grover search:} the oracle corresponds to a random 3-SAT formula with 3 variables and 5 clauses. \newline
  \textbf{(top)} decomposition of one Grover block into Clifford+T gates, the actual synthesis is achieved using Qiskit's \texttt{Grover} class using a \texttt{PhaseOracle}~\cite{Qiskit-Textbook}. \newline
  \textbf{(bottom)} depth-optimized synthesis of all (nontrivial) Clifford blocks using the methods from this work. Our SAT approach certifies that the green and purple blocks are already depth-optimal. This is not the case for the central teal block, where our method yields considerable improvements (depth $9$ vs.\ depth $5$).
  }
  \label{fig:grover}\vspace*{-3mm}
\end{figure*}
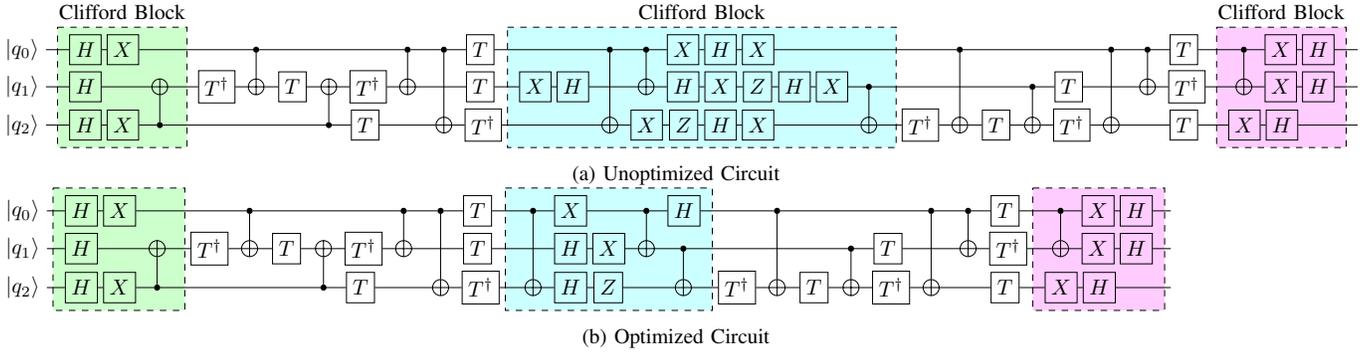

The problem of synthesizing optimal circuits in the fault-tolerant regime has commonly been considered holistically, trying to minimize the entire circuit, or partially, by reducing the T-gate count only, see e.g.~\cite{kliuchnikovSynthesisUnitariesClifford2013,amyMeetinthemiddleAlgorithmFast2013,dimatteoParallelizingQuantumCircuit2016,niemannAdvancedExactSynthesis2020}. 
The general synthesis problem turns out to be very hard.
Instead of minimizing the entire Clifford+T circuit, one can consider only the Clifford parts to try to make the general synthesis problem easier.

In this work, we show that the problem of synthesizing depth-optimal Clifford circuits is, in fact, at most as hard as Boolean satisfiability (SAT).
We achieve this by considering the synthesis problem for a larger circuit that receives a pairwise maximally entangled state as input.
Although this theoretically increases the problem size, the utilization of the maximally entangled state allows for breaking down the synthesis problem in terms of %
stabilizers (Gottesman-Knill).

Based on this perspective, we formulate a SAT encoding for synthesizing and optimizing $n$-qubit Clifford circuits of maximal depth $d_{\max}$ with $O(n^2d_{\max})$ variables and $O(n^4d_{\max})$ constraints.
For larger circuits, we furthermore develop heuristic optimization routines %
based on optimal SAT encodings that reduce Clifford circuit depth in a divide-and-conquer approach.

We implemented the proposed methods and compared them to a state of the art and openly available Clifford synthesis technique. 
Results show that while the optimal synthesis scales only up to 5 qubits, the state of the art is, on average, two orders of magnitude away from the optimal depth.
The heuristic approaches also yield better results than the state of the art and are much more scaleable than the optimal approach.
The efficacy of the methods for fault-tolerant quantum computations is illustrated by showing how the depth of Clifford+T circuits can be reduced by optimizing Clifford sub-circuits.
All implementations are publicly available via the quantum circuit compilation tool QMAP~\cite{willeMQTQMAPEfficient2023}, which is part of the \emph{Munich Quantum Toolkit} (MQT) and accessible at \mbox{\url{https://github.com/cda-tum/qmap}}.

The remainder of this work is structured as follows: \autoref{sec:background} motivates the need for synthesizing depth-optimal Clifford circuits and provides context to the classical circuit synthesis problem.
Then, \autoref{sec:related_work} briefly goes over previous work on quantum circuit synthesis, especially in the Clifford case.
In \autoref{sec:theory} we show the main construction of this work and the reduction of Clifford synthesis to SAT before giving the full details on the SAT encoding in \autoref{sec:sat}.
Based on that, we derive a heuristic optimization routine in \autoref{sec:heuristic}.
The evaluations of the methods introduced in this work are presented in \autoref{sec:evaluations}.
Finally, \autoref{sec:conclusion} concludes this work.

\section{Background and Motivation}
\label{sec:background}

To keep this work self-contained, the following sections review the central concepts of fault-tolerant quantum computing and circuit synthesis that are required throughout the rest of this work.

\subsection{Fault Tolerant Quantum Computing}
\label{subsec:qc}

Quantum computations are prone to errors from multiple sources, be it decoherence, information leakage or other kinds of noise.
In order to protect quantum information, \emph{quantum error correcting codes} (QECCs) have been introduced.
Essentially the information of a single \emph{logical qubit} is encoded into multiple \emph{physical qubits} for which various different encoding schemes have been discovered.

Executing a quantum gate on a logical qubit is not a straightforward matter anymore as it is not at all obvious how to extend the functionality of a physical gate to the logical level.
Furthermore, applying a gate to a logical qubit should be possible in a \emph{fault-tolerant} fashion, such that applying a gate should introduce errors that can be corrected with an arbitrarily high probability.

The \emph{Clifford gate-set} is a set of gates that can be applied \emph{transversally} to many QECCs, i.e., applying a Clifford operation to a logical qubit can be done by performing that Clifford gate on every physical qubit individually.
Therefore, if a gate application introduces an error onto a physical qubit, this error cannot spread throughout the encoded qubit and is simpler to correct.
Any Clifford unitary can be obtained from a sequence of \emph{Hadamard}, \emph{Phase} and \emph{CNOT} gates.
Notably, the Pauli $X$-, $Y$- and $Z$-gates are also in the Clifford group.

Clifford gates alone are not sufficient to perform universal quantum computation and can be classically simulated in polynomial time~\cite{gottesmanStabilizerCodesQuantum1997,aaronsonImprovedSimulationStabilizer2004}.
However, adding a \mbox{T-gate} \mbox{$T = |0 \rangle \! \langle 0|  + e^{i\frac{\pi}{8}} |1 \rangle \! \langle 1|$} suffices to achieve universality.
A \mbox{T-gate} can be realized fault-tolerantly using the magic state distillation protocol~\cite{bravyiUniversalQuantumComputation2005} and gate teleportation.

A fault-tolerant quantum computation can therefore be realized by alternating blocks of Clifford gates and blocks of T-gates where error syndromes are detected and corrected throughout the computation.
\autoref{fig:grover} shows the quantum circuit for a 3-qubit Grover search that is implemented using the Clifford+T gate-set on the logical qubits $\ket{q_0}$, $\ket{q_1}$ and $\ket{q_2}$.

In this fault-tolerant regime, the most important performance metric is the \emph{circuit depth}.
Assuming gates on different qubits can be executed in parallel, the depth directly determines the runtime of the computation.
As the \emph{cycle time} can vary considerably  depending on the technology the quantum computer is built upon, reducing the depth of a circuit can reduce the cost of a computation drastically.

\begin{example}
  The circuit in \autoref{fig:grover_unoptimized} has 3 Clifford blocks of interest.
  \autoref{fig:grover_optimized} shows the same circuit where each block was synthesized to have minimal depth. %
  As one can see, the two outermost blocks were already depth-optimal in the original circuit.
  Although this looks intuitively true, it still needs some form of proof. %
  The Clifford block in the middle, however, has a depth of $9$ (the first Hadamard and CNOT-gate are parallel) and can be optimized to depth $5$.
  High-depth Clifford blocks like this appear frequently when synthesizing circuits using state of the art synthesis tools for quantum computing\footnote{The circuit in question was generated using Qiskit 0.42.1}.
\end{example}
\noindent
The quest of finding such depth-optimal realizations of Clifford circuits will be our primary motivation in this work as the restricted nature of these circuits lends itself nicely to classical design automation methods such as \emph{SAT} and \emph{SMT solvers}.

\subsection{From Classical to Quantum Synthesis}
\label{sec:motivation}
Before diving into the details of our quantum protocol, it is worthwhile to briefly review classical equivalence checking of logical circuits (or functions), as well as classical circuit synthesis. This will provide guidance and also motivate our use of SAT solvers for quantum circuit synthesis.

\subsubsection{Classical Equivalence Checking and SAT}

Let $C,C'$ be two circuits with $n$ input bits. %
We say that these circuits are equivalent ($C \simeq C'$) if and only if they produce the same output for each conceivable input. In formulas, \mbox{$\forall x \in \left\{0,1\right\}^n\; Cx = C'x$}.
Taking the contrapositive yields
\begin{equation*}
C \not\simeq C' \quad \Leftrightarrow \quad 
\exists x \in \left\{0,1\right\}^n \; \text{s.t.}\; Cx \neq C'x.
\end{equation*}
The logical not-equality $Cx \neq C'x$ can readily be converted into a 
Boolean function $\phi_{C,C'}(x)$ with input $x$.
This highlights an interesting one-to-one correspondence between (the negation of) circuit equivalence and the satisfiability problem (SAT).
On a theoretical level, SAT is a hard problem that is complete for the problem class NP. %
Nevertheless a plethora of heuristic SAT solvers~\cite{alounehComprehensiveStudyAnalysis2019} %
perform very well in practice.

\subsubsection{Classical Optimal Synthesis and QBF}

Circuit synthesis %
aims to find a logical circuit, that implements a desired $n$-bit target functionality with as few elementary gates as possible. Here, we will focus on \emph{circuit depth} (i.e. layers of gates). A decision version of this optimization problem looks as follows:
\begin{equation}
\exists C_d, \mathrm{depth}(C_d) \leq d_{\max} \; \forall x \in \left\{0,1\right\}^n \; C_d x = C x,
\label{eq:circuit-synthesis-original}
\end{equation}
where $C_d$ is a placeholder for another logical circuit.
In words, this formula evaluates to true if and only if it is possible to exactly reproduce the functionality of circuit $C$ with another logical circuit $C_d$ that obeys $\mathrm{depth}(C_d) \leq d_{\max}$.
Multiple queries to this logical function with different values of $d_{\max}$ allow us to determine the optimal depth of \emph{any} circuit synthesis, e.g.\ via binary search.

It is also possible to rephrase \autoref{eq:circuit-synthesis-original} as a quantified Boolean formula (QBF). For starters, note that we can represent any logical circuit $C_d$ by a binary encoding %
 $y$ of length (at most) $\mathrm{poly}(nd_{\max})$. 
Different bit strings $y$ give rise to different circuits $C_d$ and vice versa. With this binary encoding at hand, we can adapt the Boolean function reformulation of equivalence checking to the case at hand: $\phi_C (y,x) =1$ if $Cx=C_dx$ and $C_d$ is the circuit encoded by $y$. Otherwise, this formula evaluates to $0$. Putting everything together, we obtain the following QBF reformulation of logical circuit synthesis:
\begin{equation}
\exists y \in \left\{0,1\right\}^{\mathrm{poly}(nd_{\max})} \; \forall x \in \left\{0,1\right\}^n \; \phi_C (y,x)\overset{!}{=}1.
\label{eq:circuit-synthesis}
\end{equation}
The exist quantifier ($\exists$) ranges over all possible logical circuit encodings with depth at most $d_{\max}$ ($C_d \leftrightarrow y$) while the forall quantifier ($\forall$)  ranges over all $2^n$ possible input bitstrings.

Such QBFs are, in general, much harder to handle than a mere SAT problem with only one type of quantifier. In fact, the reformulated circuit synthesis problem~\autoref{eq:circuit-synthesis} is complete for the problem class $\Sigma_2^p$ -- one branch of the second level of the polynomial hierarchy \cite{arora_barak_2009}.
Unless the polynomial hierarchy collapses to the first level (which is widely believed to be false), these problems are much harder than SAT and, by extension, equivalence checking.

QBFs do arise naturally in a variety of contexts~\cite{shuklaSurveyApplicationsQuantified2019}.
Solvers do exist, see e.g.~\cite{rabeIncrementalDeterminization2016,lonsingDepQBFSearchbasedQBF2017,rabeCAQECertifyingQBF2015} and typically rely on the counter-example-guided inductive synthesis principle (CEGIS)~\cite{solar-lezamaCombinatorialSketchingFinite2006} which has its roots in abstraction refinement (CEGAR)~\cite{clarkeCounterexampleguidedAbstractionRefinement2000,jhaTheoryFormalSynthesis2017}.
For program synthesis, for example, CEGIS-style solvers alternate between generating candidate programs and checking them for counter-examples.

\subsubsection{Going quantum: Circuit Equivalence and Synthesis}

The two classical challenges we just discussed have natural counterparts in the quantum realm.
Quantum circuits that act on $n$ qubits are also comprised of elementary (quantum) gates, but their functionality is radically different. 
We say that two such circuits $U,V$ are \emph{equivalent} if and only if
\begin{equation*}
U \simeq V \quad \Leftrightarrow \quad \forall |\psi \rangle \in \mathbb{C}^{2^n} \; U |\psi \rangle = \mathrm{e}^{i \phi} V |\psi \rangle,
\end{equation*}
where $\phi \in [0,2 \pi)$ is a complex phase.
In contrast to (classical) logical circuits, there are infinitely many possible input states $|\psi\rangle$ to be checked.\footnote{This is already our point of departure from earlier work, most notably~\cite{schneiderSATEncodingOptimal2023}. 
There, a subset of the authors asked a related, but simpler, question that arises from only considering $|\psi \rangle =|0,\ldots,0\rangle$ (i.e.\ fix a single input state).} Consistency and mutual interrelations permit us to compress this number to (at most) $4^n$  disjoint input states~\cite{grossQuantumStateTomography2010,klieschGuaranteedRecoveryQuantum2019,rothRecoveringQuantumGates2018,burgholzerRandomStimuliGeneration2021,gutaFastStateTomography2020}.
This number is, however, still exponential in the total number of qubits.
It is known that the (negated) unitary equivalence problem is QMA-complete~\cite{janzingNonidentityCheckQMAcomplete2005} (QMA is the appropriate quantum generalization of the classical problem class NP).
So, at face value, the quantum circuit equivalence problem looks even harder than its classical counterpart. 

Suppose that we are given a target unitary $U$, e.g. in the form of a high-level quantum circuit, and we want to decompose it into as few gate layers as possible (i.e. circuit depth). This practical problem occurs whenever we want to execute a high-level unitary (e.g. a quantum algorithm) on an actual $n$-qubit quantum computer. The details of this synthesis problem depend on the type of elementary gate-set, but always produces a two-fold quantified problem 
\begin{equation*}
\exists U_d, \mathrm{depth}(U_d) \leq d_{\max}\; \forall |\psi \rangle\; U_d |\psi \rangle = \mathrm{e}^{i \phi} U |\psi \rangle,
\end{equation*}
that resembles \autoref{eq:circuit-synthesis-original} with non-binary quantum states and a complex phase $\phi \in [0,2\pi)$. This problem is at least as hard as logical gate synthesis, because it includes (reversible) encodings of the latter one as a special case. 
Reversibility of quantum circuits allows us to slightly streamline this display:
\begin{equation}
\exists U_d, \mathrm{depth}(U_d) \leq d_{\max}\; \forall |\psi \rangle\; UU_d^\dagger |\psi \rangle = \mathrm{e}^{i \phi}|\psi \rangle, \label{eq:quantum-synthesis}
\end{equation}
where $U_d^\dagger$ is the adjoint or reverse circuit of $U_d$. %

For some special cases like Clifford circuits, we can recast \autoref{eq:quantum-synthesis} as a logical Boolean formula whose intrinsic difficulty is on par with logical circuit equivalence or SAT as shown in \autoref{sec:theory}.
The difficulty level is then \emph{much} lower than general classical circuit synthesis. Empowered by this rigorous theoretical insight, we then employ state of the art SAT solvers to address the full Clifford synthesis problem.  

\section{Related work}\label{sec:related_work}

Quantum circuits comprised of universal gate-sets are universal in the sense that they can approximate \emph{every} unitary evolution. Quantum gate synthesis can be viewed as a quantitative take on this very issue: what is the best way to implement a given unitary evolution, e.g. a quantum computation?

The celebrated Solovay-Kitaev theorem~\cite{kitaevQuantumComputationsAlgorithms1997,dawsonSolovayKitaevAlgorithm2006} can be viewed as a very general synthesis protocol for arbitrary single-qubit unitaries ($n=1$) and arbitrary universal gate-sets. 
Extensions to $n$ qubits can be achieved by either combining single-qubit gate synthesis with certain entangling multi-qubit gates, or by directly generalizing the Solayev-Kitaev algorithm to $d=2^n$-dimensional unitaries~\cite{dawsonSolovayKitaevAlgorithm2006}. 
Although implementations do exist, see e.g.~\cite{phamOptimizationSolovayKitaevAlgorithm2013}, this synthesis algorithm is typically too slow for practical purposes. As a result, the community has moved away from this rigorous meta-algorithm and towards more scalable heuristics. 
Many of them address the universal gate-set comprised of Clifford+T gates, see e.g.\ ~\cite{kliuchnikovSynthesisUnitariesClifford2013,amyMeetinthemiddleAlgorithmFast2013,dimatteoParallelizingQuantumCircuit2016,niemannAdvancedExactSynthesis2020}.

Clifford circuits (without T gates) are an interesting special case in this context. Mathematically speaking, they form a representation of a finite symplectic group~\cite{dehaeneCliffordGroupStabilizer2003,grossHudsonTheoremFinitedimensional2006,zhuCliffordGroupFails2016} with additional structure. For instance, it is known that every Clifford unitary can be decomposed into a Clifford circuit of depth at most $O(n)$~\cite{maslovLinearDepthStabilizer2007}.
Such insights highlight that Clifford circuits cannot be overly complex -- a feature that
also extends to Clifford synthesis. For the special case of $n=6$ qubits, competitive Clifford synthesis protocols have been put forth in \cite{bravyi6qubitOptimalClifford2020}. 
For general $n$, the algorithms by Koenig and Smolin can be used to associate a given Clifford circuit with exactly one element of the Clifford group. And, subsequently, their algorithm can be used to synthesise this very group element. 
More recently, the group of Robert Calderbank developed Clifford synthesis algorithms that even work on the logical level (i.e.\ on top of an error correcting stabilizer code)~\cite{rengaswamyLogicalCliffordSynthesis2020}, while Bravyi \emph{et al.} discovered constant depth representations of arbitrary Clifford circuits, under the assumption that  one is allowed to use global entangling operations of Ising type~\cite{bravyiConstantcostImplementationsClifford2022}.  

While the aforementioned approaches do produce a provably correct decomposition of Clifford circuits into elementary (Clifford) gates, it is not clear whether size and depth are 
(close to) optimal. This is where reformulations in terms of SAT/QBF can make a significant difference. They reformulate the gate synthesis problem as a family of quantified Boolean formulas (QBFs), one for each maximum circuit depth $d_{\max}$ we allow.
Such a QBF evaluates to true if and only if an exact circuit representation with depth (at most) $d_{\max}$ is possible and returns the explicit representation. Otherwise, it evaluates to false.
Attempting to solve these QBFs for different depths bears the potential of identifying the best circuit representation of a given functionality. This is why SAT/QBF-based synthesis approaches have long been a mainstay in classical design automation~\cite{bloemSATBasedMethodsCircuit2014}.
In fact, the idea of combining SAT-based synthesis with Clifford circuits is not entirely new. In Ref.~\cite{schneiderSATEncodingOptimal2023}, a subset of authors proposes this very idea for optimal stabilizer state preparation: find the shortest Clifford circuit that takes $|0,\ldots,0\rangle$ as input and produces a known target stabilizer state. The ideas presented here may be viewed as an extension of these earlier ideas to full Clifford circuit synthesis. 
In addition, we also supply rigorous proofs of correctness and provide additional context, as well as background.

\section{Main result and theoretical underpinning} \label{sec:theory}

Our main conceptual result is a one-to-one correspondence between Clifford synthesis, on the one hand, and Boolean satisfiability (SAT) on the other. 
The main result is displayed in \autoref{thm:main} and originates from two genuinely quantum twists to the original synthesis questions: (i) maximally entangled input stimuli and (ii) the Gottesman-Knill theorem. It forms a rigorous foundation for optimal Clifford circuit synthesis with SAT, the topic of \autoref{sec:sat} below.

\subsection{Quantum twist 1: maximally entangled input states}

The first step in our theoretical argument goes by many names, including the Choi-Jamiolkowski isomorphism~\cite{choiCompletelyPositiveLinear1975, jamiolkowskiLinearTransformationsWhich1972}, entanglement-assisted process tomography~\cite{altepeterAncillaassistedQuantumProcess2003} 
 and the flattening operation in tensor analysis~\cite{watrousTheoryQuantumInformation2018,kuengACM270Quantum2019}.
Conceptually we take two quantum circuits $U$ and $V$ on $n$-qubits. Instead of testing all possible input states, we create a single universally valid input state to check their equivalence. In a first thought process, we consider a test circuit twice as large and apply $U$ to the top $n$ qubits and the inverse of each gate in $V$ to the remaining $n$ qubits. By entangling the $k$th input qubit of $U$ with the $k$th input qubit of $V$ we create a new state $|\omega_{2n} \rangle$ of size $2n$. As $U$ and $V$ operate on entangled qubits, all changes applied by the first circuit will be reverted by the second circuit only if and only if they have the same functionality up to a global phase. If they instead differ at any point of their unitary, the resulting state will not be equal to $|\omega_{2n} \rangle$ again.
The idea of two circuits with identical unitaries reverting the changes of each other still applies when we let
$U$ and $V^\dagger$ both work on the same $n$ qubits, i.e. the control qubits of the pairwise entanglement. On the remaining target qubits we apply the identity as shown in \autoref{fig:entangle_circuit}.

\begin{figure}
    \centering
    \includegraphics[width=\columnwidth]{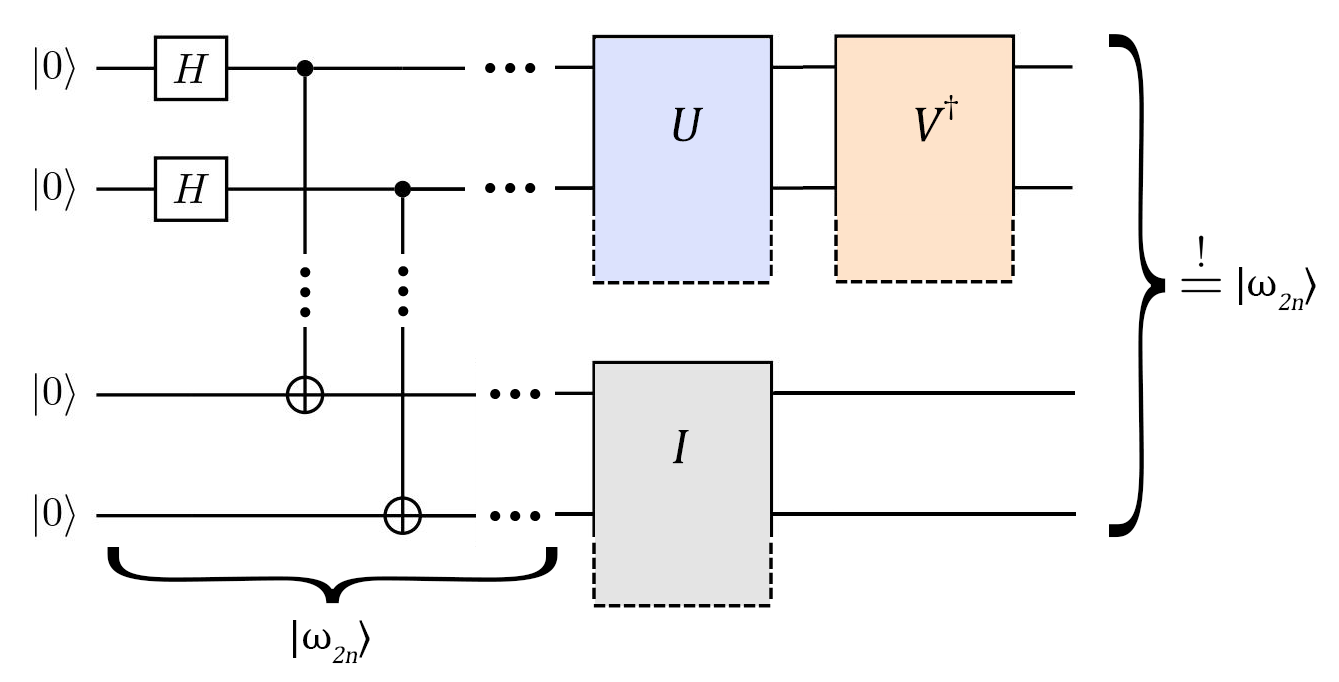}
    \caption{\emph{Illustration of entanglement-assisted equivalence-checking:}
    Two $n$-qubit circuits $U,V$ have equivalent functionality (up to a global phase), if and only if the above circuit produces the pairwise maximally entangled state $|\omega_{2n} \rangle$. Here, $V^\dagger$ is the reverse  circuit (adjoint) of $V$ and $I$ is the identity.
    }
    \label{fig:entangle_circuit}
\end{figure}

\begin{lemma} \label{lem:entanglement}
Let $U,V$ be two $n$-qubit quantum circuits and let $I^{\otimes n}$ be the $n$-qubit identity operation. Then, $U \simeq V$ if and only if
\begin{equation}
(UV^\dagger) \otimes I^{\otimes n} |\omega_{2n}\rangle \! \langle \omega_{2n}| (UV^\dagger)^\dagger \otimes I^{\otimes n} = |\omega_{2n} \rangle \! \langle \omega_{2n}|.
\label{eq:entanglement}
\end{equation}
here $|\omega_{2n} \rangle$ is the tensor product of $n$ 2-qubit Bell states that entangle the $k$th qubit with the $n+k$th qubit for all $k \in [n]$.
\end{lemma}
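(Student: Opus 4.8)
The plan is to exploit the standard "ricochet" (or transpose-trick) property of the maximally entangled state and then reduce the operator identity to a statement about the fidelity between two pure states. First I would recall that $|\omega_{2n}\rangle = \bigotimes_{k=1}^n |\Phi_k\rangle$ where each $|\Phi_k\rangle = \tfrac{1}{\sqrt 2}(|00\rangle+|11\rangle)$ pairs qubit $k$ with qubit $n+k$; equivalently, up to the obvious permutation of tensor factors, $|\omega_{2n}\rangle = \tfrac{1}{\sqrt{2^n}}\sum_{x\in\{0,1\}^n} |x\rangle|x\rangle$. The key elementary fact I would invoke (and prove in one line by expanding in the computational basis) is the ricochet identity: for any $n$-qubit operator $A$, $(A\otimes I^{\otimes n})|\omega_{2n}\rangle = (I^{\otimes n}\otimes A^{\mathsf T})|\omega_{2n}\rangle$, where $A^{\mathsf T}$ is the transpose in the computational basis. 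In particular, taking $A = UV^\dagger$, the left-hand side of \eqref{eq:entanglement} is the projector onto the pure state $|\eta\rangle := (UV^\dagger\otimes I^{\otimes n})|\omega_{2n}\rangle$, which is normalized since $UV^\dagger$ is unitary.

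Next I would observe that an identity of the form $|\eta\rangle\!\langle\eta| = |\omega_{2n}\rangle\!\langle\omega_{2n}|$ between two unit vectors holds if and only if $|\eta\rangle = \mathrm{e}^{i\phi}|\omega_{2n}\rangle$ for some global phase $\phi$, i.e. if and only if $|\langle\omega_{2n}|\eta\rangle| = 1$. Using the ricochet identity again to move $UV^\dagger$ across the entangled pairs, a direct basis computation gives $\langle\omega_{2n}|(UV^\dagger\otimes I^{\otimes n})|\omega_{2n}\rangle = \tfrac{1}{2^n}\operatorname{tr}(UV^\dagger)$. Hence \eqref{eq:entanglement} is equivalent to $\bigl|\operatorname{tr}(UV^\dagger)\bigr| = 2^n$.

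It then remains to show $\bigl|\operatorname{tr}(UV^\dagger)\bigr| = 2^n \iff U \simeq V$. The direction ($\Leftarrow$) is immediate: if $U = \mathrm{e}^{i\phi}V$ then $UV^\dagger = \mathrm{e}^{i\phi} I^{\otimes n}$ and the trace has modulus $2^n$. For ($\Rightarrow$), I would apply the triangle inequality / Cauchy--Schwarz to the spectral decomposition: $W := UV^\dagger$ is unitary with eigenvalues $\mathrm{e}^{i\theta_1},\dots,\mathrm{e}^{i\theta_{2^n}}$ on the unit circle, so $|\operatorname{tr} W| = \bigl|\sum_j \mathrm{e}^{i\theta_j}\bigr| \le 2^n$ with equality iff all the $\mathrm{e}^{i\theta_j}$ coincide, i.e. $W = \mathrm{e}^{i\phi} I^{\otimes n}$, i.e. $U = \mathrm{e}^{i\phi} V$, which is exactly $U\simeq V$. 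Chaining the three equivalences closes the argument.

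I do not expect a serious obstacle here; the only points requiring care are bookkeeping of the tensor-factor ordering when passing between the "Bell-pair" description of $|\omega_{2n}\rangle$ and the "$\sum_x |x\rangle|x\rangle$" description (a fixed qubit permutation that does not affect any of the norms or traces), and making the "projectors equal $\iff$ vectors equal up to phase" step precise for unit vectors. Both are routine, so the bulk of the write-up is the two short basis computations for the ricochet identity and for $\langle\omega_{2n}|(W\otimes I)|\omega_{2n}\rangle$.
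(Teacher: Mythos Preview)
Your argument is correct. Both you and the paper start from the same underlying fact---the operator--vector correspondence for the maximally entangled state (what you call the ricochet identity, what the paper phrases as $|\omega_{2n}\rangle = \mathrm{vec}(I^{\otimes n})/\sqrt{2^n}$ and $(W\otimes I)|\omega_{2n}\rangle = \mathrm{vec}(W)/\sqrt{2^n}$)---but you then diverge. The paper simply observes that $\mathrm{vec}$ is a linear bijection, so $(UV^\dagger\otimes I)|\omega_{2n}\rangle = \mathrm{e}^{i\phi}|\omega_{2n}\rangle$ is literally the same equation as $UV^\dagger = \mathrm{e}^{i\phi} I^{\otimes n}$, and then passes to the outer product to kill the phase. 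You instead compute the overlap $\langle\omega_{2n}|(UV^\dagger\otimes I)|\omega_{2n}\rangle = 2^{-n}\operatorname{tr}(UV^\dagger)$, reduce to $|\operatorname{tr}(UV^\dagger)|=2^n$, and finish with the equality case of the triangle inequality on the unitary eigenvalues. Your route is perfectly sound and self-contained, but it is one step longer than necessary: once you have $(W\otimes I)|\omega_{2n}\rangle = \mathrm{vec}(W)/\sqrt{2^n}$, injectivity of $\mathrm{vec}$ gives the conclusion without ever mentioning traces or spectra. On the other hand, your detour through $|\operatorname{tr}(UV^\dagger)|$ makes explicit the connection to the entanglement fidelity, which some readers may find illuminating.
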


\begin{proof}%
Note that the Bell state is proportional to the vectorized identity matrix, i.e.
$|\omega_{2n}\rangle \sqrt{2^{n}}=\mathrm{vec}(I^{\otimes n})$. 
The operator-vector correspondence, see e.g. \cite[Eq.~(1.132)]{watrousTheoryQuantumInformation2018}, then asserts $UV^\dagger \otimes I_n |\omega_{2n} \rangle 
= \mathrm{vec}(UV^\dagger)/\sqrt{2^n}$.
This ensures that  $UV^\dagger=\mathrm{e}^{i \phi}I_n$ (as a matrix-valued equality) if and only if %
\begin{equation*}
UV^\dagger \otimes I |\omega_{2n} \rangle = \tfrac{1}{\sqrt{2^n}}\mathrm{vec}(UV^\dagger) \overset{!}{=} \tfrac{\mathrm{e}^{i \phi}}{\sqrt{2^n}}\mathrm{vec}(I^{\otimes n}) = \mathrm{e}^{i \phi}|\omega_{2n} \rangle.
\end{equation*}
\autoref{eq:entanglement} takes the outer product of this vector-valued equation to absorb the remaining global phase (mixed state formalism).
\end{proof}

Applying \autoref{lem:entanglement} to the 
quantum circuit synthesis formula~\autoref{eq:quantum-synthesis} 
gets rid of the complex phase and the forall quantifier over all possible input states:
\begin{equation}
\exists U_d,\mathrm{depth}(U_d) \leq d_{\max}\; (UU_d^\dagger) \otimes I \Omega (UU_d^\dagger)^\dagger \otimes I = \Omega,
\label{eq:quantum-synthesis-simplified}
\end{equation}
where we have used $\Omega$ as a shorthand notation for $|\omega_{2n} \rangle \! \langle \omega_{2n}|$.
However, for the erasure of an entire forall quantifier, we have to go to the mixed state formalism and effectively double the number of qubits involved from $n$ to $2n$, see \autoref{fig:entangle_circuit}.

\subsection{Quantum twist 2: Gottesman-Knill theorem}

On first sight, \autoref{eq:quantum-synthesis-simplified} looks much less daunting than \autoref{eq:quantum-synthesis} or even its classical counterpart~\autoref{eq:circuit-synthesis-original}. This is due to the fact that entanglement allows us to check for equivalence with a single input state (\autoref{lem:entanglement}) instead of a forall ($\forall$) over exponentially many possibilities.

To exploit this reformulation, there are two broad avenues on how to proceed:
(i) Use an actual quantum computer to empirically check the single remaining equivalence in \autoref{eq:quantum-synthesis-simplified}. This avenue was taken, for instance, in quantum assisted quantum compiling~\cite{khatriQuantumassistedQuantumCompiling2019}.
(ii) Use (strong) classical simulation of quantum circuits to check whether the two $2n$-qubit states in \autoref{eq:quantum-synthesis-simplified} are really equivalent. 
Four broad simulation approaches come to mind: array-based~\cite{deraedtMassivelyParallelQuantum2019}, stabilizer-based~\cite{aaronsonImprovedSimulationStabilizer2004}, tensor networks~\cite{bridgemanHandwavingInterpretiveDance2017} and decision diagrams~\cite{hillmichJustRealThing2020,niemannQMDDsEfficientQuantum2016}.

Here, we follow the second avenue and adopt stabilizer-based simulation. 
The main reason for this is 
that $\Omega =  |\omega_{2n} \rangle \! \langle \omega_{2n}|$, the $2n$-qubit state responsible for all these simplifications, is itself a stabilizer state with very desirable structure.

\begin{fact} \label{fact:bell-stabilizer}
The $2n$-qubit state $\Omega=|\omega_{2n} \rangle \! \langle \omega_{2n}|$ is a stabilizer state with generators $\langle(X \; I^{\otimes (n-1)})^{\otimes 2}, (Z \; I^{\otimes (n-1)})^{\otimes 2}\rangle$ where the parenthesis $(\;)$ stand for cyclic permutation.
\end{fact}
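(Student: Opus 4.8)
The plan is to check the three defining conditions of a stabilizer presentation for the listed family of $2n$ Pauli operators: that they pairwise commute, that they are independent and do not generate $-I^{\otimes 2n}$, and that they all fix $\ket{\omega_{2n}}$. Uniqueness of the joint $+1$-eigenvector of a maximal stabilizer group then pins down $\Omega$ as exactly the state it presents. The first thing to do is to unpack the cyclic-permutation shorthand: $(X\,I^{\otimes(n-1)})^{\otimes 2}$ together with its $n$ cyclic shifts is the list $g^X_k := X_k X_{n+k}$ for $k\in[n]$, i.e.\ the Pauli string that acts as $X$ on the $k$th qubit of \emph{each} $n$-qubit block and trivially everywhere else; similarly $g^Z_k := Z_k Z_{n+k}$. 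So the claim is that $\Omega$ is stabilized by $\langle g^X_1,\dots,g^X_n,g^Z_1,\dots,g^Z_n\rangle$.

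The heart of the argument is a reduction to a single Bell pair via the tensor structure of $\ket{\omega_{2n}}$. Reordering the $2n$ qubits so that qubit $k$ is placed next to qubit $n+k$, the state becomes literally $\ket{\Phi^+}^{\otimes n}$ with $\ket{\Phi^+}=\tfrac{1}{\sqrt2}(\ket{00}+\ket{11})$, and the same relabelling turns $g^X_k$ and $g^Z_k$ into $X\otimes X$ and $Z\otimes Z$ acting on the $k$th pair. Hence it suffices to recall the one-pair identity $\ket{\Phi^+}\!\bra{\Phi^+}=\tfrac14\big(I\otimes I + X\otimes X - Y\otimes Y + Z\otimes Z\big)$, which immediately gives $(X\otimes X)\ket{\Phi^+}=(Z\otimes Z)\ket{\Phi^+}=\ket{\Phi^+}$; tensoring over the $n$ pairs yields $g^X_k\ket{\omega_{2n}}=g^Z_k\ket{\omega_{2n}}=\ket{\omega_{2n}}$ for every $k$. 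The remaining bookkeeping is routine: generators with different indices act on disjoint qubits and hence commute, while $g^X_k$ and $g^Z_k$ commute because their two single-qubit anticommutations of $X$ with $Z$ cancel; the $2n$ operators are independent since their binary symplectic representations have pairwise-distinct supports $\{k,n+k\}$ in complementary ($X$- vs.\ $Z$-) blocks; and $-I^{\otimes 2n}$ cannot lie in the generated group because that group fixes the nonzero vector $\ket{\omega_{2n}}$, whereas $-I^{\otimes 2n}$ does not. Consequently $\langle g^X_1,\dots,g^Z_n\rangle$ is a genuine stabilizer group of order $2^{2n}$; its joint $+1$-eigenspace is one-dimensional, is spanned by $\ket{\omega_{2n}}$ by the computation above, and therefore $\Omega=\ket{\omega_{2n}}\!\bra{\omega_{2n}}$ is precisely the stabilizer state it presents (passing to the outer product also discards the irrelevant global phase).

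The only place where care is genuinely needed is the qubit-labelling encoded in the cyclic-permutation notation: one must verify that the shifts of $(X\,I^{\otimes(n-1)})^{\otimes 2}$ really enumerate $\{X_kX_{n+k}\}_k$ — i.e.\ that the two blocks are shifted in lockstep so that each generator respects the Bell partition — rather than coupling qubit $k$ to a mismatched qubit of the second block, since the whole proof rests on this alignment. Everything downstream (commutation, independence, absence of $-I$, uniqueness of the joint eigenvector) is the standard stabilizer-formalism boilerplate, and the only genuinely \emph{quantum} input is the two-line Pauli decomposition of the Bell projector.
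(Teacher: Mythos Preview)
Your proof is correct and follows essentially the same route as the paper: both reduce to the well-known fact that the two-qubit Bell state $\ket{\Phi^+}$ is stabilized by $\langle XX, ZZ\rangle$ and then tensor over the $n$ pairs at positions $(k,n+k)$. The paper's justification is a one-sentence appeal to this reduction plus explicit examples for $n=2,3$; you supply the stabilizer-formalism bookkeeping (commutation, independence, absence of $-I$, uniqueness of the joint eigenvector) that the paper leaves implicit.
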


\noindent
It is well-known that the $2$-qubit Bell state is a stabilizer state with generators $\langle{XX, ZZ}\rangle$~\cite{nielsenQuantumComputationQuantum2010}. \autoref{fact:bell-stabilizer} follows from taking the $n$-fold tensor product of these generators at appropriate qubit locations. For $n{=}2$ for example this means $\langle XIXI, ZIZI, IXIX, IZIZ\rangle$ or for $n{=}3$ $\langle{XIIXII, ZIIZII, IXIIXI, IZIIZI, IIXIIX, IIZIIZ}\rangle$, etc. Still, every $k$th qubit is entangled with the $(n+k)$th qubit.

We can efficiently simulate stabilizer circuits (of which Clifford circuits are a part) in polynomial time on a classical computer according to the Gottesman-Knill theorem~\cite{gottesmanHeisenbergRepresentationQuantum1998}. To simulate the circuit's action on the $2n$-qubit stabilizer state $\Omega$, we perform a logical mapping of stabilizers depending on the gates and again obtain a stabilizer state. %
We can conclude whether the applied circuits performed the identity based on the equality of the input and output generators.

\begin{fact}[Gottesman-Knill]
Suppose that both $U$ and $U_d$ are $n$-qubit Clifford circuits. Then, it is possible to efficiently check
\begin{equation}
(UU_d^\dagger) \otimes I \Omega (UU_d^\dagger)^\dagger \otimes I = \Omega. \label{eq:gottesman-knill-fact}
\end{equation}
\end{fact}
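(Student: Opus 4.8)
The plan is to verify the claimed statement by spelling out the standard stabilizer-tableau simulation of Clifford circuits and then observing that checking \autoref{eq:gottesman-knill-fact} amounts to comparing two sets of $2n$ stabilizer generators, which is a polynomial-time task. The key conceptual point is that Clifford unitaries act on stabilizer states by a well-defined action on Pauli generators (conjugation), so we never need to write down the exponentially large state vector.

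First I would recall the representation: a stabilizer state on $m$ qubits (here $m = 2n$) is uniquely determined by an abelian group of $m$ independent Pauli operators (with a $\pm 1$ phase) that stabilize it. By \autoref{fact:bell-stabilizer}, $\Omega$ is the stabilizer state with the explicit $2n$ generators $g_1,\dots,g_{2n}$ listed there, each a tensor product of $X$'s, $Z$'s and $I$'s, so it can be stored in $O(n^2)$ bits as a binary symplectic tableau (each Pauli generator is a $2m$-bit vector plus one phase bit). Second, I would invoke the defining property of the Clifford group: for any Clifford unitary $W$ on $m$ qubits and any Pauli $P$, the operator $W P W^\dagger$ is again a Pauli (up to sign). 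Hence, if $S = \langle g_1,\dots,g_m\rangle$ stabilizes $|\psi\rangle$, then $\langle W g_1 W^\dagger, \dots, W g_m W^\dagger\rangle$ stabilizes $W|\psi\rangle$: indeed $(W g_i W^\dagger) W|\psi\rangle = W g_i |\psi\rangle = W|\psi\rangle$. The map $P \mapsto W P W^\dagger$ on the $2m$-dimensional symplectic vector space is linear, and for the elementary Clifford gates $H$, $S$ (phase), and $\mathrm{CNOT}$ it is given by fixed, explicitly known update rules on the tableau columns (with the phase-bit bookkeeping of Aaronson–Gottesman); applying these rules gate-by-gate costs $O(m)$ per gate, hence $O(m \cdot |W|) = O(n \cdot |W|)$ total, where $|W|$ is the number of elementary gates. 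Here the relevant Clifford circuit is $(UU_d^\dagger)\otimes I$ acting on $m=2n$ qubits, which has polynomially many gates in $n$ and $d_{\max}$, so evolving all $2n$ generators of $\Omega$ is polynomial-time.

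Third, I would observe that the left-hand side of \autoref{eq:gottesman-knill-fact} is the stabilizer state $W \Omega W^\dagger$ with $W = (UU_d^\dagger)\otimes I$, whose generators are the evolved Paulis $W g_i W^\dagger$ computed above. Two stabilizer states are equal if and only if their stabilizer groups coincide (the state is the normalized projector onto the joint $+1$ eigenspace, which is one-dimensional and determined by the group). So \autoref{eq:gottesman-knill-fact} holds iff $\langle W g_1 W^\dagger,\dots,W g_{2n} W^\dagger\rangle = \langle g_1,\dots,g_{2n}\rangle$ as subgroups of the Pauli group. This equality of two groups each given by $2n$ generators can be decided by putting both generating sets into a canonical form (row-reduction of the binary symplectic matrices, tracking signs), or equivalently by checking that every $W g_i W^\dagger$ lies in the span of the $g_j$ with the correct sign and vice versa — all of which is Gaussian elimination over $\mathbb{F}_2$ on $O(n)\times O(n)$ matrices, i.e. $O(n^3)$ time. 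Combining the evolution step and the comparison step yields an overall polynomial-time procedure, which is exactly the "efficiently check" claim.

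I do not expect a genuine obstacle here, since the statement is essentially a restatement of the Gottesman–Knill theorem specialized to a fixed, nicely-structured stabilizer input; the only mildly delicate point is the phase-bit bookkeeping in the tableau updates (getting the signs right when conjugating $Y$-type generators and when $\mathrm{CNOT}$ introduces sign flips), and the observation that equality of stabilizer states really does reduce to equality of stabilizer groups rather than merely to one group being contained in the other — but independence of the $2n$ generators on both sides (Clifford conjugation preserves independence, being a symplectic transformation) makes containment in one direction already sufficient. If one wanted to be fully careful one would also note that $\Omega$, being the projector $|\omega_{2n}\rangle\!\langle\omega_{2n}|$ onto a pure stabilizer state, has a stabilizer group of full rank $2n$, so there is no subtlety about mixed stabilizer states.
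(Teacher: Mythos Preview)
Your proposal is correct and follows essentially the same approach as the paper: both invoke the Gottesman--Knill theorem by tracking the stabilizer generators of $\Omega$ under the Clifford action $(UU_d^\dagger)\otimes I$ and then comparing the resulting generators to the original ones. The paper's justification is much terser (it simply points to efficient stabilizer simulation and the conceptual picture that $U_d^\dagger$ undoes $U$'s action on the generators), and additionally observes that because the bottom $n$ qubits carry the identity, one only needs to track the top-half Paulis---an optimization you do not mention but which is orthogonal to the correctness argument you give.
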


\noindent
This also follows from the conceptual idea of the entangled input: If $U$ and $U_d$ have the same functionality, $U_d^\dagger$ would invert the stabilizer mapping done by $U$, which results in not altering the input stabilizers at all. Note that input and output generators must match exactly. As the bottom half of this $2n$ circuit applies the identity, it will not alter the generators on the last $n$ qubits. Therefore we can also cut the generators we need to check in half, eliminating the $^{\otimes 2}$ operation in \autoref{fact:bell-stabilizer}.

More precisely, it is possible to construct a Boolean function $\phi_{U,\Omega}(U_d)$ that evaluates to $1$ if the input circuit $U_d$ achieves \autoref{eq:gottesman-knill-fact} and $0$ otherwise. We will present such an explicit construction in \autoref{sec:sat} below. 
All that matters at this point is that 
we can represent any (at most) depth-$d_{\max}$ Clifford unitary $U_d$ with a bitstring $y \in \left\{0,1\right\}^{l}$ that contains (at most)
$l(n,d_{\max})=O \left( n^2 d_{\max}\right)$ Boolean variables. And, what is more, we can actually represent $\phi_{U,\Omega}(y)$ as a CNF with $O\left(n^4 d_{\max}\right)$ clauses of constant length.
Putting all this together ensures that we can rewrite the Clifford circuit synthesis problem as
\begin{equation}
\exists y \in \left\{0,1\right\}^{l(n,d_{\max})} \; \phi_{U,\Omega}(y) \overset{!}{=}1. \label{eq:clifford-synthesis-final}
\end{equation}

\subsection{Main result and synopsis}

The insights culminating in \autoref{eq:clifford-synthesis-final} are worth a prominent display and a bit of additional context.

\begin{theorem}[SAT reformulation of Clifford synthesis] \label{thm:main}
Let $U$ be a $n$-qubit Clifford unitary (target) and fix a maximum depth $d_{\max} \in \mathbb{N}$.
Then, the decision problem ``is it possible to exactly reproduce $U$ with (at most) $d_{\max}$ Clifford layers?''
can be rephrased as an instance of SAT with $O(n^2 d_{\max})$ variables and $O(n^4 d_{\max})$ clauses of constant size each.
\end{theorem}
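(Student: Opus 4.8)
\emph{Proposed proof.} The statement is, in essence, a bookkeeping consequence of the two quantum twists developed above, so the plan is to assemble them into a single $\exists$-quantified Boolean statement and then count. I would start from the quantum synthesis formulation \autoref{eq:quantum-synthesis} with the target $U$ a Clifford unitary and the witness $U_d$ ranging over Clifford circuits of depth at most $d_{\max}$ --- precisely the decision problem in the statement. Applying \autoref{lem:entanglement} with $V = U_d$ collapses the universal quantifier over input states and the free global phase into the single matrix identity \autoref{eq:quantum-synthesis-simplified}. By \autoref{fact:bell-stabilizer} the test state $\Omega$ is a stabilizer state, and $W := UU_d^\dagger$ is a Clifford unitary supported on the first register only; hence \autoref{eq:gottesman-knill-fact} holds \emph{exactly} iff $W$ maps each of the $2n$ Bell generators to itself under conjugation, which forces $W X_k W^\dagger = X_k$ and $W Z_k W^\dagger = Z_k$ for every $k$, i.e.\ $W$ commutes with all Paulis and therefore $W = \mathrm{e}^{i \phi} I$. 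At this point the decision problem has become: \emph{is there a Clifford circuit $U_d$ of depth at most $d_{\max}$ whose conjugation action returns each of the $2n$ (first-register) Bell generators unchanged?}

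Next I would build the Boolean encoding around a layered template: $d_{\max}$ time slices, each touching every qubit with at most one gate --- either a single-qubit Clifford drawn from a fixed finite set (e.g.\ $\{H, S\}$, at the cost of a constant-factor slack in the number of slices) or one leg of a CNOT. A slice is then described by $O(n)$ single-qubit-gate-choice variables together with an $O(n^2)$-entry CNOT-placement matrix, subject to $O(n^2)$ ``well-formedness'' clauses per slice (at most one gate per qubit, consistency and symmetry of the placement matrix). To evaluate the conjugation action I would carry the symplectic description of the $2n$ tracked Paulis --- equivalently, the running stabilizer tableau --- across the slice boundaries: that is $O(n)$ bits per tracked Pauli, hence $O(n^2)$ state variables per boundary and $O(n^2 d_{\max})$ in total, matched by another $O(n^2 d_{\max})$ placement variables. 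This already yields the claimed variable count, and a satisfying assignment of these variables is exactly the witness $y$ of \autoref{eq:clifford-synthesis-final}.

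The remaining task is to express the slice-to-slice dynamics and the two boundary conditions as clauses of constant width. The enabling local fact is that a single-qubit Clifford on qubit $q$ rewrites only the $x_q, z_q$ and phase components of each tracked Pauli by a fixed case rule, while a CNOT on $(q, q')$ rewrites only the $x_q, z_q, x_{q'}, z_{q'}$ components; each such elementary rewrite is a Boolean relation on $O(1)$ bits, so after Tseitin-style linking through the already-allocated intermediate tableau bits it becomes $O(1)$ clauses of constant length. Guarding each rewrite by its placement variable and ranging over all $O(n^2)$ possible gate locations in a slice and all $O(n^2)$ tableau positions gives $O(n^4)$ update clauses per slice, hence $O(n^4 d_{\max})$ in total; the boundary conditions ``input tableau $=$ the $2n$ Bell generators'' and ``output tableau $=$ the $2n$ Bell generators'' add only $O(n^2)$ clauses, and pre-loading the generator images of the known Clifford $U$ is a one-off relabelling with no asymptotic cost. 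Altogether this gives $O(n^2 d_{\max})$ variables and $O(n^4 d_{\max})$ constant-size clauses, which is the assertion.

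I expect the main obstacle to be keeping the encoding simultaneously sound, complete, and within the constant-clause-size budget, rather than any single step. Three points will need care: (i) that the layered template with a fixed finite single-qubit gate set captures \emph{every} Clifford circuit of depth at most $d_{\max}$ and nothing else, which rests on $\{H, S, \mathrm{CNOT}\}$ generating the Clifford group plus a constant slack in the slice count; (ii) that the tableau-update rule really is $O(1)$-local for each elementary gate --- this is exactly where the Clifford restriction (Gottesman--Knill) is indispensable, since a non-Clifford gate would spread support; and (iii) that all phase/sign bits are tracked faithfully, because \autoref{eq:gottesman-knill-fact} is an \emph{exact} equality and a dropped sign would let an inequivalent circuit pass. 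Everything past that point is routine Tseitin transformation and counting.
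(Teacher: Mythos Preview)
Your proposal is correct and follows essentially the same route as the paper: collapse the $\forall$ over inputs via \autoref{lem:entanglement}, invoke the stabilizer structure of $\Omega$ (\autoref{fact:bell-stabilizer}) together with Gottesman--Knill to reduce equivalence to a tableau-matching condition, then encode a layered Clifford template plus per-layer tableau updates and count variables and clauses exactly as in \autoref{sec:sat}. The only cosmetic wrinkle is the phrasing of your boundary conditions (tracking $U_d$ you want the \emph{output} tableau to equal $U$'s precomputed images, not the Bell generators again), but your parenthetical about ``pre-loading the generator images of $U$'' shows you have this right; the paper makes the same move via its stabilizer/destabilizer reduction to $n$ qubits.
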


\noindent
This insight has both conceptual and practical implications, especially if one keeps in mind that $d_{\max} \leq O(n)$ for any Clifford circuit~\cite{maslovLinearDepthStabilizer2007}. In turn, \emph{binary search} allows for exactly determining the minimum circuit depth for a given Clifford unitary $U$ by solving (at most) $\lceil \log_2 (n)\rceil+O(1)$ SAT reformulations for varying ansatz depths $d_{\max}$.
What is more, satisfying assignments of the Boolean formula in \autoref{eq:clifford-synthesis-final} are bit encodings
of actual Clifford circuits that exactly reproduce $U$ and have depth at most $d_{\max}$ (synthesis). %

On a conceptual level, this approach provides a poly-time reduction of optimal Clifford synthesis to (logarithmically many instances of) SAT. This highlights that this special case is \emph{much} easier than the general circuit synthesis problem (classical and quantum). In particular: optimal Clifford synthesis is at most as hard as classical circuit equivalence checking. %

On a practical level, \autoref{thm:main} provides a rigorous and context-specific motivation for employing state of the art SAT solvers to rigorously address the Clifford synthesis problem.
An actual step-by-step introduction to this encoding, as well as benchmarks, are the content of the remainder of this article.

\section{Optimal Clifford Circuit Synthesis with SAT}
\label{sec:sat}

The efficient simulability of Clifford circuits is based on the \emph{stabilizer tableau} encoding of a stabilizer state, see e.g.\ \cite{nielsenQuantumComputationQuantum2010} and references therein.
This polynomial-sized representation of a quantum state is the key to deriving a polynomially-sized SAT encoding for the considered synthesis problem.
Hence, before going into details on the encoding itself, we will give a brief recap on how to work with stabilizer tableaus.

\subsection{Stabilizer Tableau Representation of Stabilizer States}
\label{subsec:tableau}
An $n$-qubit stabilizer state can be represented by a \mbox{$(2n+1) \cross n$} binary matrix called the stabilizer tableau.
The idea is that every stabilizer generator for a state can be written using $2n+1$ bits of information.
In the standard notation~\cite{aaronsonImprovedSimulationStabilizer2004} for stabilizer tableaus, there are binary variables for Pauli $X$- and $Z$-type stabilizers $x_{i,j},z_{i,j}$ with $i,j \in \{0,1,..n-1\}$ and $r_i=1$ if the $i$-th stabilizer has a negative phase: %
$$
 \left[\begin{array}{ccc|ccc|c}
     x_{0,0} & \ldots & x_{0,n-1} & z_{0,0} & \ldots & z_{0,n-1} & r_0 \\
     \vdots & \ddots & \vdots & \vdots & \ddots & \vdots & \vdots \\
     x_{n-1,0} & \ldots & x_{n-1,n-1} & z_{n-1,0} & \ldots & z_{n-1,n-1} & r_{n-1}
\end{array}\right]$$
For a Pauli-$Y$ type stabilizer at position $i,j$ both $x_{i,j}$ and $z_{i,j}$ must be set to $1$ at the corresponding position.
The stabilizer states in this format can be altered by the usual Clifford gates and the following update rules, where $\oplus$ denotes a bitwise XOR operation:
\begin{itemize}
    \item \emph{Applying H on qubit $j$}: swaps the $j$th $X$-type column with the $j$-th $Z$-type column and $r\mathrel{\oplus}=x_jz_j$. This follows from the transformations $HXH^\dagger = Z$, $HZH^\dagger = X$ and $HYH^\dagger = -Y$, hence it switches $x_{i,j}\leftrightarrow z_{i,j} \; \forall i$ and flips the phase only in case of a Pauli-$Y$ stabilizer;
    \item \emph{Applying S on qubit $j$}: is a bitwise $XOR$ of the $j$-th \mbox{$X$-type} column to the $j$-th $Z$-type column $z_j\mathrel{\oplus}=x_j$ again with $r\mathrel{\oplus}=x_jz_j$;
    \item \emph{Applying CNOT on control qubit $c$ and target qubit $t$}: is a bitwise XOR of the $c$-th $X$-type column to the \mbox{$t$-th} $X$-type column $x_t\mathrel{\oplus}=x_c$ and vice versa for $Z$-type $z_c\mathrel{\oplus}=z_t$ as well as $r\mathrel{\oplus}=x_cz_t (x_t\oplus z_c\oplus1)$.
\end{itemize}
Further update rules for any other Clifford gate can be derived from these basic rules.

More information about a stabilizer state can be encoded into the tableau by including its \emph{destabilizers}~\cite{aaronsonImprovedSimulationStabilizer2004}.
These are Pauli-strings that, together with the stabilizers, generate the entire Pauli group.
They are treated identically to the stabilizer generators for the purpose of updating the stabilizer tableau.

\subsection{Tableau and Gate Variables}
\label{subsec:vars}

In the following, let $Q$ be the set of qubits acted on by a quantum circuit and $d_\mathit{max}$ the maximal depth of the circuit.

While all Clifford unitaries can be obtained from just $H$, $S$ and $CNOT$ gates, conveniently the target gate-set used for compilation may also include other gates like the Pauli $X$, $Y$, and $Z$ operations or two-qubit gates like the $CZ$ gate.
To reflect this flexibility in the encoding, we define two sets $\mathit{SQGs}$ and $\mathit{TQGs}$, the set of single-qubit gates and two-qubit gates, respectively, such that they can be used to implement any Clifford circuit.

At every layer of the quantum circuit, a certain gate can either be applied or not.
This suggests introducing the variables
\begin{alignat*}{3}
 Svars  & = \{g_q^d                           & \mid \;& g \in \mathit{SQGs},\; q \in Q,\; 0 \leq d < d_\mathit{max}\}                           \\
  Tvars & = \{g_{q_0,\; q_1}^d & \mid \; & g \in \mathit{TQGs},\; q_0 \in Q,\; q_1 \in Q\setminus\{q_0\}, \\
        &                                     &        & 0 \leq d < d_\mathit{max}\} 
\end{alignat*}
representing the application of a gate to a specific qubit (or pair of qubits) at depth $d$.

The possible stabilizer tableaus are encoded in a straightforward fashion according to their definition.
The $Z$-, $X$- and $R$-part of the tableau use the variables
\begin{alignat*}{2}
  Xvars & =\{z_q^d && \mid q \in Q,\; 0 \leq d < d_\mathit{max}\},  \\
  Zvars & =\{x_q^d && \mid q \in Q,\; 0 \leq d < d_\mathit{max}\},  \\
  Rvars & =\{r^d   && \mid 0 \leq d < d_\mathit{max}\},
\end{alignat*} 
where every element of the sets $Zvars$, $Xvars$ and $Rvars$ is a bitvector.
These bitvectors encode how all stabilizers act on the $Z$-, $X$-, $R$- part for a particular qubit.

Based on the construction in \autoref{lem:entanglement}, this encoding requires $2n$ qubits in order to guarantee that all circuits synthesized from these variables have the same unitary.
But having these $n$ additional qubits has the undesirable side-effect that the synthesized circuit should act as the identity on the lower $n$ qubits of the circuit.
This unnecessarily blows up the search space as the identity can be implemented ambiguously.
One could enforce constraints on these qubits, but this would unnecessarily increase the size of the encoding.
We can avoid this complication by considering only the upper $n$ qubits and switching from stabilizers of the entangled input state to stabilizers and destabilizers of the $\ket{0}^{\otimes n}$ state as stated by the following fact.

\begin{fact}
    For a Clifford unitary $U$ the stabilizers of $(U\otimes I^{\otimes n}) |\omega_{2n} \rangle$ on the first $n$ qubits are identical to the stabilizers and destabilizers of $U\ket{0}^{\otimes n}$.
\end{fact}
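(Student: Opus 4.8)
The plan is to propagate the canonical stabilizer-tableau generators through the two operations that make up the left-hand side — the pairwise entanglement and the conjugation by $U$ — and then read them off against the canonical tableau of $U\ket{0}^{\otimes n}$. Throughout, I read ``on the first $n$ qubits'' as: for each generator, keep the tensor factor acting on qubits $1,\dots,n$ together with its sign; this is unambiguous because every generator that appears factorizes across the cut between the first and the last $n$ qubits.

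First I would invoke \autoref{fact:bell-stabilizer}: $\ket{\omega_{2n}}$ is stabilized by the $2n$ generators $g_k^{X}=X_k X_{n+k}$ and $g_k^{Z}=Z_k Z_{n+k}$, $k\in\{1,\dots,n\}$, all with trivial sign, and each factorizes as $X_k\otimes X_{n+k}$ and $Z_k\otimes Z_{n+k}$ with the two factors supported on qubits $1,\dots,n$ and $n{+}1,\dots,2n$ respectively. Next I would conjugate by $U\otimes I^{\otimes n}$: the stabilizer group of $(U\otimes I^{\otimes n})\ket{\omega_{2n}}$ is the conjugate of that of $\ket{\omega_{2n}}$, and since $U\otimes I^{\otimes n}$ acts as the identity on the last $n$ qubits, its generators are $(UX_kU^\dagger) X_{n+k}$ and $(UZ_kU^\dagger) Z_{n+k}$. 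Because $U$ is Clifford, $UX_kU^\dagger$ and $UZ_kU^\dagger$ are signed Pauli strings on the first $n$ qubits and the cut-factorization is preserved, so the first-$n$-qubit parts of the $2n$ generators are exactly the $2n$ signed Pauli strings $UX_1U^\dagger,\dots,UX_nU^\dagger,UZ_1U^\dagger,\dots,UZ_nU^\dagger$.

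Finally I would match this list with the tableau of $U\ket{0}^{\otimes n}$. The state $\ket{0}^{\otimes n}$ has the canonical tableau with stabilizer generators $Z_1,\dots,Z_n$ and destabilizer generators $X_1,\dots,X_n$; conjugating the whole tableau by the Clifford $U$ — which preserves every (anti)commutation relation and hence the stabilizer/destabilizer roles — gives the tableau of $U\ket{0}^{\otimes n}$ with stabilizers $UZ_kU^\dagger$ and destabilizers $UX_kU^\dagger$. This is precisely the list from the previous step, with the $Z$-derived generators playing the role of stabilizers and the $X$-derived generators that of destabilizers, which is the claim.

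I do not expect a genuine obstacle here, only bookkeeping. The two points to be careful about are: (i) the sign bits ($r$-entries) propagate consistently — they do, because every generator of $\ket{\omega_{2n}}$ has $r=0$, so the sign of $(UX_kU^\dagger) X_{n+k}$ is carried entirely by $UX_kU^\dagger$ and agrees with the corresponding destabilizer sign in the conjugated tableau of $\ket{0}^{\otimes n}$; and (ii) the statement is to be understood relative to the specific generating set of \autoref{fact:bell-stabilizer}, since choosing different generators for the stabilizer group of $\ket{\omega_{2n}}$ would yield a different (though group-equivalent) restriction to the first $n$ qubits.
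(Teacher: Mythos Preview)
Your proposal is correct and follows the same approach as the paper, which in fact only offers an informal justification rather than a full proof: it observes that propagating the generators of $\Omega$ from \autoref{fact:bell-stabilizer} through $U\otimes I$ and restricting to the first $n$ qubits yields a $2n\times(n{+}1)$ tableau whose initial form (the diagonal $X$- and $Z$-block) is precisely the stabilizer/destabilizer tableau of $\ket{0}^{\otimes n}$. Your write-up is more explicit than the paper's, and your remarks on sign propagation and the dependence on the chosen generating set are well taken.
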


\noindent
Together with \autoref{fact:bell-stabilizer}, \autoref{lem:entanglement} tells us that for a Clifford circuit $U$ the $2n$ stabilizers of $U \otimes I_n |\omega_{2n} \rangle$ uniquely fix the unitary of the circuit.
Given $U$, we can explicitly calculate these stabilizers by propagating the generators for $\Omega$ through $U \otimes I$ ignoring the lower $n$ qubits.
This boils down to only analysing the first half of every stabilizer generator of the $2n$-qubit state $\Omega$. The result is a $2n(n+1)$ tableau for every given $U$. The initial tableau has diagonal entries with value $1$ and coincides with the stabilizers of the $|0\rangle^{\otimes n}$ input state ($Z$-type) combined with the respective destabilizers ($X$-type).

Hence, we can encode our problems using only $|Q|=n$ qubits and each bitvector for the tableau variables has size $2n$ since the information about the destabilizers has to be included as well.
\subsection{Transition Relation}
\label{subsec:transition}

With the variables defined, we can now encode how gates act on the stabilizer tableaus as described in \autoref{subsec:tableau}.
Naturally, the transition between tableaus would then be a constraint along the lines of
$$g_q^d \Rightarrow (\mathit{UpdateZ}(g, q, d) \land \mathit{UpdateX}(g, q, d) \land \mathit{UpdateR}(g, q, d)),$$
where the update formulas on the right encode the action of the gate on a qubit at a certain depth.
For a Hadamard this would mean\vspace*{-.5mm}
\begin{align*}
  \mathit{UpdateZ}(H, q, d) &= (z_q^{d+1} \Leftrightarrow x_q^{d}) \\
  \mathit{UpdateX}(H, q, d) &= (x_q^{d+1} \Leftrightarrow z_q^{d}) \\
  \mathit{UpdateR}(H, q, d) &= (r^{d+1} \Leftrightarrow (r^d \oplus x_q^{d}\land z_q^d))
\end{align*}\vspace*{-1.5mm}

While this encoding is correct, it is also wasteful in the sense that it would
lead to $|\mathit{SQGs}|+|\mathit{TQGs}|$ number of implications of this type for every qubit and depth.%
This number can be decreased significantly by noting that many gates act identically on the different parts of the stabilizer tableau (quantum computation is local).
The Pauli gates, for example, act as identity on the $Z$- and $X$-part of the tableau, only differing in how they change the $R$-part.

Since we know our gate-set, we can collect all possible transformations of the individual parts of the tableau a-priori.
Let $Z\textrm{--}updates(q, d)$ be the set of all possible updates to the $Z$-part of the stabilizer tableau on qubit $q$ at depth $d$.
The elements of these sets are logical formulas over the tableau variables.
We can then define a mapping $Z\textrm{--}impliedby(q, d): Z\textrm{--}updates(q,d) \rightarrow \mathcal{P}(Svars)$ that maps every update formula to the set of single-qubit gate variables that act on the stabilizer tableau with that update rule.

The single-qubit changes to the $Z$-part are then encoded by introducing the following constraint for every qubit $q$, depth $0 \leq d < d_\mathit{max}-1$ and $Z\textrm{--}update \in Z\textrm{--}updates(q, d)$:
$$\bigvee_{g_q^d \in Z\textrm{--}impliedby(Z\textrm{--}update)} \implies (z_q^{d+1} \Leftrightarrow Z\textrm{--}update).$$
Obviously, this can be done in a similar fashion for the $X$- and $R$-parts of the tableau as well as for two-qubit gates.

While the constraints at this point encode all possible stabilizer tableaus, there are some variable assignments that lead to invalid circuits, e.g. when a qubit is acted on by two gates at the same depth.
We, therefore, need to introduce another set of constraints for every depth $d$ and qubit $q$ to ensure consistency of the obtained solution.
\[
  \mathit{ExactlyOne}
  \begin{pmatrix*}[l]
    \{g_q^d &\mid g \in \mathit{SQGs}\} &\cup\\ \{g_{q,\;q_1}^d&\mid q_1 \in Q, g \in \mathit{TQGs}\} &\cup\\ \{g_{q_0,\;q}^d&\mid q_0 \in Q, g \in \mathit{TQGs}\}    
  \end{pmatrix*}
\]\vspace*{-2mm}

\subsection{Symmetry Breaking}
\label{subsec:symmetry}\vspace*{-.5mm}

Symmetry breaking~\cite{biereHandbookSatisfiability2009} is a widespread technique from the SAT-solving community. It introduces additional constraints to an existing CNF formula to avoid searching in symmetric parts of the search space.
This can be done in an automated fashion by analyzing the formula for automorphisms to obtain.
so-called \enquote{symmetry breakers}.
Doing this automatically has the downside that it is not clear which symmetries are found and whether the deduced constraints actually make the SAT instance any easier to solve or even harder.
In the case of the SAT formulation above, we can obtain symmetry breakers manually by using knowledge specific to Clifford synthesis.

We can impose additional constraints on the SAT solver by eliminating valid solutions that could be expressed in a simpler manner.
For example, the Hadamard gate is self-inverse.
We can therefore add the constraint
\vspace*{-.5mm}
\[H_q^d \implies \neg H_q^{d+1}\]
\vspace*{-1mm}
for $q \in Q$ and $0 \leq d < d_\mathit{max}-1$.
This eliminates all assignments to the gate variables that model a sequence of two consecutive Hadamard gates.

Another symmetry addresses possible degrees of freedom in the gate ordering and is best illustrated by means of the following equivalent circuits:
\vspace*{-1mm}
\begin{center}
\begin{tikzpicture}
  \begin{yquant}
    qubit q[2];

    h q[0];
    x q[1];
    box {$S$} q[1];
    cnot q[1] | q[0];
  \end{yquant}
\end{tikzpicture}
\qquad
\begin{tikzpicture}
  \begin{yquant}
    qubit q[2];

    x q[1];
    align q;
    h q[0];
    box {$S$} q[1];
    cnot q[1] | q[0];
  \end{yquant}
\end{tikzpicture}
\; .\vspace*{-1mm}
\end{center}
The Hadamard on the first qubit can either be parallel to the $X$ gate or the $S$ gate on the second qubit.
We can break this symmetry by imposing that the identity \mbox{single-qubit} gate cannot be followed by a non-identity single-qubit gate.
Otherwise, the non-identity gate could be moved to the left without changing the Clifford unitary.
More formally, for $q \in Q$, $0 \leq d < d_\mathit{max}-1$ we impose
\vspace*{1mm}
\[I_q^d \implies \bigwedge_{g \in \mathit{SQGs}\setminus\{I\}}\neg g_q^{d+1},\]
where $I$ is the identity gate.

A similar constraint can be imposed on two-qubit gates.
If the identity is applied to a pair of qubits, no two-qubit gate can come after the identities.
Again, we add the constraint
\[(I_{q_0}^d \land I_{q_1}^d) \implies \bigwedge_{g \in \mathit{TQGs}}\neg g_{q_0,\; q_1}^{d+1}.\]
for $q_0 \in Q$, $q_1 \in Q\setminus\{q_0\}$, $0 \leq d < d_\mathit{max}-1$.

There are many more symmetries that can be broken in the encoding.
In principal any Clifford gate identity can be used to derive a symmetry breaking constraint.
However, a trade-off has to be made between the number of constraints and the size of the solution space.

\vspace*{1mm}
\subsection{Optimizing Circuit Depth}
\label{subsec:cost}

The above encoding can be used to synthesize circuits that have \emph{at most} a depth of $d_{\max}$ but does not necessarily synthesize depth-optimal circuits.
This is only guaranteed if $d_{\max}$ is exactly the optimal depth, which has to be determined first.

One approach would be to start with an initial guess for $d_\mathit{max}$ and iteratively decrease it until the corresponding SAT instance has a solution but the SAT instance for $d_\mathit{max}-1$ does not.
A (theoretically) efficient way to achieve just that is binary search. 
The original circuit's depth can be used as an upper bound.
If this is far away from the optimum, however, it can lead to the generation of instances that are tough to solve.

Instead, the upper bound can be determined dynamically.
For a related problem (state preparation circuits), binary search has been explored in Ref.~\cite{schneiderSATEncodingOptimal2023} where it was also proposed to geometrically increase the depth horizon in which a solution is searched for.
Unfortunately, after a few iterations, the SAT calls will be quite costly and only promises to speedup the entire optimization if the runtime to solve a SAT instance grows sub-exponentially.
In case of exponential growth, simply searching linearly or in an arithmetic progression is faster.

Yet another way of gauging the initial depth is from empirical knowledge.
If it is known that, on average, the SAT method produces solutions that are 20\% shallower than the output circuit by another optimization routine, we can simply try to run it and start with the expected depth as an initial guess.
From this initial guess linear or quadratic probing can be employed to find the optimal solution.

\vspace*{2mm}
\section{%
Heuristic Approach via Circuit Decomposition
}
\label{sec:heuristic}

Above, we have seen how to reformulate Clifford synthesis as a SAT problem.
However, the search space for that problem grows exponentially with the maximal circuit depth $d_{\max}$ and the number of qubits $n$.
Depending on the specific SAT solver being used, this exact synthesis approach can quickly become prohibitively expensive. 
One way to diminish these scaling issues is to split a big Clifford circuit into a collection of sub-circuits that can be synthesized in parallel. This splitting can be done both horizontally (to reduce qubit number) as well as vertically (to reduce circuit depth)
and considerably reduces the size of the SAT search space. 
The result is a versatile heuristic (it cannot be guaranteed that the splitting into sub-circuits is optimal) that can be applied to larger Clifford circuits.

More precisely, let $G$ be a target Clifford circuit on $n$ qubits with maximum depth $d_{\max}$. Then, the associated SAT encoding features bitstrings of length $l=O \left(n^2 d_{\max}\right)$,
which corresponds to a search space of size $2^{l}=2^{O(n^2 d_{\max})}$.
We can now vertically split up $U=L_1 L_2$, where each $L_i$ has depth $d' \approx  d_{\max}/2 $, and apply our Clifford synthesis to each Clifford block. The result is two parallel SAT instances with bitstrings of length $l' \approx l/2$ each. In turn, the size of the search space is only $2^{l'}\approx\sqrt{2^l}$.
This quadratic improvement in search space size comes at the cost of a (potentially) non-optimal decomposition into two blocks. 

This general idea extends to more than two vertical blocks.
Let $G = L_1 \cdots L_m$ be a Clifford circuit with $m$ layers, i.e. blocks of qubits $L_i = g_i^1\cdots g_i^{k_i}$ such that all $g_i^{k_i}$ act on different qubits and can therefore be run in parallel.
Given a \emph{split size} $s$, we can partition the circuit into $\frac{m}{s}$ sub-circuits $G_i = L_{i\cdot s}\cdots L_{(i+1)\cdot s} $ for $0 \leq i < m$ (here it is assumed that $m$ is divisible by $s$ but extending the argument is straight-forward).
We then simply need to compute the target stabilizer tableau for each of these sub-circuits.
For $G_i$, this can be done by simulating the circuit up to $L_{i \cdot s}$.
We can then use the SAT reformulation proposed in \autoref{sec:sat} for each of the sub-circuits.
The split size $s$ can also provide a good initial guess as to the maximal depth needed for the encoding of each individual circuit.

Since no data needs to be shared between the individual instances, all the SAT instances can be run in parallel to obtain the optimized sub-circuits $G_i^\prime$ which are then concatenated for the final result, i.e., $G^\prime = G_1^\prime \cdots G_{\frac{m}{s}}^\prime$.
Note that such a splitting approach is guaranteed to produce a correct Clifford gate decomposition. This follows from applying \autoref{thm:main} to each of the subblocks involved. However, it may not achieve optimal circuit depth.
After all, this divide-and-conquer heuristic treats different blocks of the target circuit completely independently and scales with circuit depth and qubit number of the initial circuit.
Scaling issues can be countered by decreasing the split size, but the point of diminishing returns is eventually reached where the size of the split has a strong negative impact on the target metric.

\begin{table*}[t]
  \centering
  \caption{Experimental results for random Clifford circuits.}
  \label{tab:rand_cliff}
\vspace*{-3mm}
  \begin{tabular}[t]{c r r r r  r r r r r r r r}
    \toprule
    & \multicolumn{3}{c}{Optimal} & \multicolumn{3}{c}{Heuristic Vertical} & \multicolumn{3}{c}{Heuristic Horizontal} & \multicolumn{3}{c}{Bravyi et al.} \\ \cmidrule(lr){2-4} \cmidrule(lr){5-7} \cmidrule(lr){8-10} \cmidrule(lr){11-13}
    \multicolumn{1}{c}{$n$} & \multicolumn{1}{c}{$d$} &\multicolumn{1}{c}{$|G|$}& \multicolumn{1}{c}{$t~[\si{s}]$} & \multicolumn{1}{c}{$d$} &\multicolumn{1}{c}{$|G|$}&  \multicolumn{1}{c}{$t~[\si{s}]$} & \multicolumn{1}{c}{$d$}&\multicolumn{1}{c}{$|G|$} & \multicolumn{1}{c}{$t~[\si{s}]$} &\multicolumn{1}{c}{$d$}&\multicolumn{1}{c}{$|G|$} & \multicolumn{1}{c}{$t~[\si{s}]$} \\ 
    \csvreader[head to column names]{csv/results.csv}{}{\\\nqubits&\dopt&\gatesopt&\topt&\dvertical&\gatesvertical&\tvertical&\dhorizontal&\gateshorizontal&\thorizontal&\dgreedy&\gatesgreedy&\tgreedy} \\\bottomrule
  \end{tabular}\\\vspace*{1mm}
  {\small $n$: Number of qubits \hspace*{0.20cm} d: Average depth \hspace*{0.20cm} $|G|$: Average gate count\hspace*{0.20cm} $t$: average runtime}\vspace*{2mm}
\vspace*{-3mm}
\end{table*}

Since any sub-circuit can be optimized using the SAT method without changing the circuit's functionality, we can take the divide-and-conquer approach even further.
Given a maximal number of qubits $n_\mathit{max}$, a circuit can be decomposed into sub-circuits $G = G_1 \cdots G_m$ such that the number of qubits in each circuit is bounded by $n_\mathit{max}$ and there are no two-qubit gates between any two sub-circuits for parallel optimization.

These two splitting techniques can be combined to make the approach as scalable as possible.
Given a depth threshold $d_\mathit{thr}$, a split size $s < d_\mathit{trh}$, and a maximal number of qubits $n_\mathit{max}$, the circuit can be split into sub-circuits of at most $n_\mathit{max}$ qubits.
If any of these sub-circuits is deeper than $d_\mathit{thr}$, the circuits can be further split horizontally into blocks of depth $s$. These circuit blocks can then be optimized independently from each other (in parallel). 

\section{Evaluations}
\label{sec:evaluations}

The methods proposed in \autoref{sec:sat} and \autoref{sec:heuristic} have been implemented in C++ using the publicly available SMT solver Z3~\cite{demouraZ3EfficientSMT2008}.
The implementation is integrated into the quantum circuit compilation tool QMAP~\cite{willeMQTQMAPEfficient2023}, which is part of the \emph{Munich Quantum Toolkit} (MQT) and available at \mbox{\url{https://github.com/cda-tum/qmap}}. 

To see how well the proposed methods perform in practice, we considered two types of benchmarks:
\begin{itemize}
\item[(i)]\emph{Random Clifford circuits} (inspired by randomized benchmarking~\cite{knillRandomizedBenchmarkingQuantum2008,magesanEfficientMeasurementQuantum2012,kuengComparingExperimentsFaulttolerance2016}): 
The circuits were obtained by sampling a random stabilizer tableau (including information about the destabilizers).
Since our SAT encoding is based on the stabilizer tableau, this is already a valid input format for our method and no explicit circuit has to be generated.
For every qubit number $n$, 10 random stabilizer tableaus have been generated and the results have been averaged over all these runs. The proposed methods are compared to the state of the art greedy Clifford synthesizer by Bravyi et al.~\cite{bravyiCliffordCircuitOptimization2021}
The timeout was set to $\SI{3}{\hour}$.

\item[(ii)] \emph{Clifford+T implementations of Grover search} (inspired by fault-tolerant quantum computation
~\cite{kitaevQuantumComputationsAlgorithms1997,groverFastQuantumMechanical1996,gottesmanTheoryFaulttolerantQuantum1998}): 
we generated circuits for the Grover search algorithm using random Boolean functions as oracles.
For each qubit number $n$, 10 circuits were generated in this fashion.
Since these circuits contain $T$-gates, the circuit is partitioned into Clifford blocks and $T$ blocks and each (non-trivial) Clifford block is optimized separately.
\end{itemize}

All evaluations have been performed on a a \SI{3.6}{\giga\hertz} Intel Xeon W-1370P machine running Ubuntu 20.04 with \SI{128}{\gibi\byte} of main memory and 16 hardware threads.
For generating and synthesizing circuits as well as for the greedy optimizer, the quantum computing SDK Qiskit~\cite{Qiskit-Textbook} (version 0.42.1) by IBM has been used.

The results of our experiments for synthesizing random Clifford circuits can be seen in \autoref{tab:rand_cliff}.

The data under column \emph{Optimal} shows the results using the proposed optimal SAT approach.
Unfortunately, the increase in variables and the scaling of the encoding in the number of qubits can be seen rather drastically here.
It is only possible to synthesize random Clifford circuits up to 5 qubits within the given time limit.
Nonetheless, we can see that other methods, especially the state of the art, synthesize $n$-qubit circuits that are far from depth-optimal, increasing the depth on average by $105.26\%$ ($n=3$), $142.42\%$ ($n=4$) and $201.32\%$ ($n=5$), respectively.

The column \emph{Heuristic Vertical} shows the results using the proposed heuristic approach where the circuits were partitioned vertically, i.e., the resulting sub-circuits all had the same number of qubits as the original circuit.
Compared to the vertical splitting heuristic, the state of the art still produces circuits that are $21.92\%$ deeper on average.

Column \emph{Heuristic Horizontal} shows the results using the proposed heuristic approach where the circuits were partitioned into sub-circuits of five qubits each. At this value, the optimal approach still yields results in an acceptable amount of time (thus there are no entries for three and four qubits).
This decomposition leads to much better results for lower qubit numbers but eventually produces worse results than the vertical decomposition after nine qubits -- partially as an artefact of the decomposition scheme.
The circuits cannot always be split perfectly into five qubit sub-circuits and potentially parallel gates in the original circuit might not be parallel anymore after the optimization.
Another reason is that it gets increasingly difficult to find deep sub-circuits of only five qubits for random Clifford circuits since the interactions between qubits are bound to entangle more than five qubits rather quickly.
A big upside of this synthesis method is its runtime. Since the runtime for synthesizing five qubit circuits is predictable, synthesizing all these sub-circuits can be done within a predictable time as well.

All in all, the results in \autoref{tab:rand_cliff} suggest that the heuristic approach could be improved with more sophisticated circuit decomposition techniques.
The current state of the art produces circuits that are far from the optimal, which leaves quite some room for improvement.

The Grover search benchmark was chosen to analyse the possible improvement of the depth of actual fault-tolerant quantum circuits.
Guided by the results of the random Clifford benchmarks we looked at Grover circuits for three to five qubits.
The proposed optimization scheme resulted in circuits that were $13.38\%$, $21.71\%$ and $16.72\%$ shorter on average.
The Grover benchmarks are made publicly available under \mbox{\url{https://github.com/cda-tum/qmap}}.

\section{Conclusion and Outlook}
\label{sec:conclusion}

Classical circuit synthesis, on the one hand, and Quantified Boolean Formulas (QBF), on the other, are two 
seemingly very different but related problems. This correspondence can be made one-to-one and forms the basis 
of several state of the art approaches for optimal (logical) circuit synthesis: QBF solvers are employed to determine the shortest circuit representation of a desired logical functionality. 

In this work, we have extended this general mindset to the quantum realm, considering the task of decomposing $n$-qubit Clifford circuits into as few elementary gates as possible. We then showed that deciding if is it possible to represent a given Clifford functionality with at most $d_{\max}$ Clifford layers can be re-cast as a satisfiability (SAT) problem in $O(n^2d_{\max})$ Boolean variables. The reduction uses maximally entangled input stimuli, as well as the Gottesman-Knill theorem. It highlights that Clifford synthesis contained in NP (the first level of the polynomial hierarchy) is easier than general logical circuit synthesis which is complete for $\Sigma_2^p$ (the second level of the polynomial hierarchy).

In the electronic design automation community, SAT solvers have been applied to tackle classical synthesis problems with great success. 
We showed that similar approaches apply to quantum computing and that there is large potential to replicate the success of solving classical synthesis problems.
While the optimal synthesis approach scales poorly in the number of qubits, it shows that there is a large gap between the optimal solution and the state of the art.
Up to our knowledge, these are the first Clifford synthesis protocols that (i) are provably correct and (ii) come with a certificate of optimality. 
Furthermore, all of the proposed methods are publicly available within the \emph{Munich Quantum Toolkit} (MQT) as part of the open-source quantum circuit compilation tool QMAP (\url{https://github.com/cda-tum/qmap}).
QMAP already works natively with IBM's Qiskit and even tighter integration with quantum SDKs is left for future work.

These initial findings are encouraging and open the door for several interesting follow-up projects, i.e. a \emph{refinement of the proposed numerical solver}. This will entail tweaks in the stabilizer encoding to squeeze out more performance, but also trying different SAT solvers and novel pre-processing techniques to determine sharper initial bounds on the maximum circuit depth. 
A \emph{configurable gate-set} is also on our to-do list. For now, we only use $H,S,S^\dagger, CNOT$, as well as Pauli gates. 
In the future, this could be adapted to include additional single-qubit gates (e.g.\ the full single-qubit Clifford group) and two-qubit gates (e.g. $SWAP$, $CZ$, $CY$, \ldots). Note that more `elementary' gates directly translate into a more complex encoding, and therefore a larger logical search space (more variables). On the other hand, this increased expressiveness per time step is bound to decrease the circuit depth and, therefore, result in shorter SAT formulas overall (fewer logical clauses). This trade-off might well be worthwhile.

The proposed encoding procedure is flexible enough to facilitate \emph{architecture-aware} synthesis of Clifford circuits. 
Some quantum architectures only support certain interactions between their qubits, typically defined by a coupling map.
We can respect this coupling map in the proposed SAT encoding by only permitting Clifford gates that are also native to the concrete architecture. 

Virtually all these future research directions also extend to improve the proposed \emph{heuristic solver} for near-optimal Clifford synthesis. 
We intend to explore different divide-and-conquer strategies (decomposing into sub-circuits) and explore (near-)optimal ways on how to best synthesize each of these circuit blocks.

Last but not least, researchers are now beginning to suggest and explore the use of quantum computers to solve challenging subroutines in quantum synthesis. Quantum assisted quantum compiling~\cite{khatriQuantumassistedQuantumCompiling2019} falls into this category. For this work, the deliberate restriction to Clifford circuits has allowed us to not have to think along these lines (yet) -- the Gottesman-Knill theorem ensures that classical simulation remains tractable throughout. But it is an interesting direction for future work to fruitfully combine quantum assisted quantum compiling ideas with conventional SAT-solving techniques. We leave such synergies for future work.

\section*{Acknowledgments}

The authors thank Armin Biere, David Gross, and Martina Seidl for inspiring discussions and valuable feedback. 

T.P., R.W.\  and L.B.\ acknowledge funding from the European Research Council (ERC) under the European Union’s Horizon 2020 research and innovation program (grant agreement No. 101001318), as well as financial support from the Munich Quantum Valley, which is supported by the Bavarian state government with funds from the Hightech Agenda Bayern Plus.
All authors have been supported by the BMWK on the basis of a decision by the German Bundestag through the projects ProvideQ and QuaST, the Project QuantumReady (FFG 896217) and the State of Upper Austria in the frame of the COMET program (managed by the FFG).

\printbibliography

\end{document}